\newtheorem{definition}{Definition}
\newtheorem{theorem}[definition]{Theorem}
\newtheorem{lemma}[definition]{Lemma}
\newtheorem{conjecture}{Conjecture}
\newtheorem{question}[conjecture]{Question}
\let\epsilon=\varepsilon
\let\rho = \varrho
\begin{document}

\title{Weak oddness as an approximation of oddness and resistance in cubic graphs}

\author{
Robert Lukoťka and Ján Mazák
\\[3mm]
\\{\tt \{lukotka, mazak\}@dcs.fmph.uniba.sk}
\\[5mm]
Comenius University, Mlynská dolina, 842 48 Bratislava\\
}

\maketitle

\begin{abstract}
We introduce \emph{weak oddness} $\omega_{\textrm w}$, a new measure of uncolourability of cubic graphs, defined as the least number of odd components in an even factor.
For every bridgeless cubic graph $G$, $\rho(G)\le\omega_{\textrm w}(G)\le\omega(G)$, 
where $\rho(G)$ denotes the resistance of $G$ and $\omega(G)$ denotes the oddness of $G$, so this new measure is an approximation of both oddness and resistance.
We demonstrate that there are graphs $G$ satisfying $\rho(G) < \omega_{\textrm w}(G) < \omega(G)$, and that the difference between any two of those three measures can be arbitrarily large.
The construction implies that if we replace a vertex of a cubic graph with a triangle, then its oddness can decrease by an arbitrary large amount.
\end{abstract}

\section{Oddness and resistance}

Cubic graphs naturally fall into two classes depending on
whether they do or do not admit a $3$-edge-colouring.
Besides the trivial family of graphs with bridges, which are trivially uncolourable, there are many examples of $2$-edge-connected cubic graphs that do not admit a $3$-edge-colouring.
Such graphs are called \emph{snarks}; sometimes they are required to
satisfy additional conditions, such as cyclic
$4$-edge-connectivity and girth at least five, to avoid
triviality.

In their many attempts to understand snarks better, researchers have come up with various measures that refine the notion of ``being close to 3-edge-colourable''.
In Section~\ref{sec2}, we introduce a new such measure closely related to oddness and resistance. We follow \cite{my} in our presentation of these two concepts.

Every bridgeless cubic graph has a
$1$-factor \cite{petersen} and consequently also a $2$-factor.
It is easy to see that a cubic graph is $3$-edge-colourable if
and only if it has a $2$-factor that only consists of even
circuits. In other words, snarks are those cubic graphs which
have an odd circuit in every $2$-factor. The minimum number of
odd circuits in a $2$-factor of a bridgeless cubic graph $G$ is
its \emph{oddness}, and is denoted by $\omega(G)$. Since every
cubic graph has even number of vertices, its oddness must also
be even.

The relevance of oddness stems from the importance of snarks.
The crux of many important problems and conjectures, like the Tutte's $5$-flow conjecture or the cycle
double cover conjecture, consists in dealing with snarks.
While most of these
problems are exceedingly difficult for snarks in general, they are often tractable for those that are close to being $3$-edge-colourable.
For example, the $5$-flow conjecture has been
verified for snarks with oddness at most $2$ (Jaeger
\cite{jaeger2}) and for cyclically $6$-edge-connected snarks with oddness at most $4$ \cite{sm}, and the cycle double cover conjecture has been
verified for snarks with oddness at most $4$ (Huck and Kochol
\cite{hk}, H\"aggkvist and McGuinness \cite{hg}). Snarks with
large oddness thus remain potential counterexamples to these
conjectures and therefore deserve further study.

Another natural measure of uncolourability of a cubic graph is
based on minimising the use of the fourth colour in a
$4$-edge-colouring of a cubic graph, which can alternatively be viewed as minimising the number of edges that have to be deleted in order to get a
$3$-edge-colourable graph. Surprisingly, the required
number of edges to be deleted is the same as the number of
vertices that have to be deleted in order to get a
$3$-edge-colourable graph (see \cite[Theorem~2.7]{steffen1}).
This quantity is called the \emph{resistance} of $G$, and will
be denoted by~$\rho(G)$. Observe that $\rho(G) \le \omega(G)$
for every bridgeless cubic graph $G$ since deleting one edge
from each odd circuit in a $2$-factor leaves a colourable
graph. The difference between $\rho(G)$ and $\omega(G)$ can be arbitrarily large in
general \cite{steffen2}.

\section{Weak oddness}
\label{sec2}

A \emph{factor} of a graph $G$ is a subgraph containing all vertices of $G$. The most commonly encountered factors are those defined by vertex degrees: for instance, a $1$-factor is a spanning subgraph whose all vertices have degree $1$. There is, however, an almost endless list of other variants of factors; we refer the reader to the excellent surveys \cite{kano, plummer} for an overview. One important kind of factors are \emph{even factors}, that is, factors with all vertices of even degree \cite{kano, tsp, plummer, xiong}. Even factors play a profound role in the areas of flows and cycle covers where snarks are especially relevant. It is, therefore, very natural to introduce a measure of uncolourability based on even factors: the \emph{weak oddness} of a cubic graph $G$, denoted by $\omega_{\textrm w}(G)$, is the least number of odd components in an even factor of $G$. (We do not need to avoid graphs with bridges in the definition since every graph has an even factor containing just isolated vertices.)

In a cubic graph, an even factor is comprised of circuits and isolated vertices, thus it can be viewed as a relaxation of $2$-factor where ``degenerated'' circuits of length $1$ are allowed. 
Since every $2$-factor is an even factor, $\omega_{\textrm w}\le \omega$. On the other hand, we can remove a vertex from each odd component of an even factor of a graph $G$, and thus obtain a $3$-edge-colourable graph, so $\rho\le \omega_{\textrm w}$. 
The new invariant $\omega_{\textrm w}$ thus approximates resistance from above more tightly than oddness, and approximates oddness from below more tightly than resistance. Of course, the enhanced approximation potential materializes only if weak oddness differs from both oddness and resistance. We demonstrate that this is indeed the case in Section~\ref{construction}.

Besides that, there are several reasons why we think weak oddness is useful.
First, weak oddness does not change when we contract a triangle. On the other hand, as we will show in Section~\ref{conc}, contracting a triangle can increase oddness arbitrarily.
We hope that this property might improve the chance of developing inductive methods based on oddness.

Second, many results for graphs with small oddness can be easily transformed to results for graphs with small weak oddness. As $\omega$ might be larger than $\omega_{\textrm w}$, the modified statements comprise a larger family of graphs. As a potential example we give the following.
\begin{theorem}\label{cdcthm}
Graphs with weak oddness at most $4$ have a cycle-double cover.
\end{theorem}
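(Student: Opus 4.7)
The plan is to reduce Theorem~\ref{cdcthm} to the cycle-double cover result for snarks of oddness at most $4$ from \cite{hk,hg} via a $Y$-$\Delta$ expansion at each isolated vertex of an optimal even factor. Let $G$ be a bridgeless cubic graph with $\omega_{\textrm w}(G)\le 4$ and let $F$ be an even factor of $G$ with at most $4$ odd components. The odd components of $F$ split into odd circuits, which already have the shape required by the oddness result, and isolated vertices, which are the obstruction to treating $F$ as a $2$-factor.

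For each isolated vertex $v$ of $F$, with neighbours $u_1,u_2,u_3$ in $G$, I would delete $v$, introduce three new vertices $t_1^v,t_2^v,t_3^v$ forming a triangle $T_v$, and replace each edge $vu_i$ by $t_i^v u_i$; call the resulting graph $G'$. The graph $G'$ is cubic by inspection, and it is bridgeless because, for any edge $e$ of $G'$, contracting $T_v$ in $G'-e$ produces either $G$ or $G$ with a non-bridge removed. Let $F'$ be $F$ together with the edges of every $T_v$. Every vertex of $G'$ has degree $2$ in $F'$, so $F'$ is a $2$-factor of $G'$, and its odd circuits are precisely the odd circuits of $F$ and the new triangles $T_v$. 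Hence $\omega(G')\le\omega_{\textrm w}(G)\le 4$, and \cite{hk,hg} supplies a cycle-double cover $\mathcal{C}'$ of $G'$ (directly if $G'$ is a snark, classically if it is $3$-edge-colourable).

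The last step is to lift $\mathcal{C}'$ back to $G$ by contracting each triangle $T_v$ to a single vertex $v$. A standard local count at $T_v$, using that each of the six edges meeting $T_v$ is covered twice and each triangle vertex has cycle-degree $0$ or $2$ in every cycle, shows that exactly one cycle of $\mathcal{C}'$ equals $T_v$ and exactly three further cycles each meet $T_v$ in a path of length two through two of the vertices $t_i^v$. Discarding the $T_v$-cycle and contracting each of the three pass-through cycles at $T_v$ produces three cycles through $v$ that together cover $vu_1,vu_2,vu_3$ exactly twice, while every other edge retains its double cover. The modifications at different triangles are local and independent, so performing them simultaneously converts $\mathcal{C}'$ into a cycle-double cover of~$G$.

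The main obstacle is this final lifting. The counting at $T_v$ is elementary, but some care is needed to check that each pass-through cycle remains a single circuit after contraction (true because it visits only two of the three vertices of $T_v$) and that any multi-edges potentially produced when some of the $u_i$ coincide, either within one expansion or across several adjacent isolated vertices of $F$, do not disrupt the final cover.
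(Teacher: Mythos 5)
Your overall strategy coincides with the paper's: expand vertices into triangles so that the even factor $F$ becomes a $2$-factor with the same number of odd components, invoke the cycle double cover theorem for oddness at most $4$ \cite{hk,hg}, and pull the cover back by undoing the expansion. (The paper expands \emph{every} vertex of $G$ rather than only the isolated vertices of $F$; this is a cosmetic difference, and your variant correctly supplies the extra observation that $G'$ stays bridgeless.)

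There is, however, a concretely false claim in your lifting step. It is not true that a cycle double cover $\mathcal{C}'$ of $G'$ must contain the triangle $T_v$ as one of its members together with exactly three pass-through paths of length two. Counterexample: in the triangular prism with triangles $t_1t_2t_3$ and $u_1u_2u_3$ and rungs $t_iu_i$, the three $5$-circuits $u_2t_2t_1t_3u_3u_2$, $u_1t_1t_2t_3u_3u_1$, $u_1t_1t_3t_2u_2u_1$ together with the triangle $u_1u_2u_3$ cover every edge exactly twice, yet no member of the cover equals $t_1t_2t_3$; each of the three $5$-circuits traverses all three vertices of that triangle. So your ``standard local count'' does not hold, and the subsequent sentence (``discarding the $T_v$-cycle and contracting each of the three pass-through cycles'') describes only one of several possible local configurations. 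The correct and simpler justification --- essentially the paper's one-line remark --- is that each $t_i^v$ has only one edge leaving $T_v$, so every circuit of $\mathcal{C}'$ meets $T_v$ in a connected subgraph: either the whole triangle or a (possibly empty) path. Contracting $T_v$ therefore sends each circuit either to a single vertex, which is discarded, or to a circuit of $G$ through $v$, and each edge $vu_i$ inherits its multiplicity from $t_i^vu_i$, so the result is a cycle double cover of $G$. With that replacement your proof is correct; the theorem's conclusion is unaffected.
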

\begin{proof}
Let $G$ be a graph with weak oddness $4$. Let us replace each vertex of $G$ with a triangle,
obtaining a graph $G'$. Every even factor of $G$ can be naturally extended to a $2$-factor of $G'$, thus the oddness of $F'$ is at most $4$.
As graphs with oddness at most $4$ have a cycle double cover \cite{hg, hk}, the graph $G'$ has a cycle double cover. This cover
reduced to the edges of $G$ is a cycle double cover of $G$.
\end{proof}
It is known that if $\rho(G) \le 2$ (and thus $\omega_{\textrm w}(G)\le\omega(G)\le 2$), then
$\rho(G)=\omega(G)$ \cite[Lemma 2.5]{steffen1} and thus $\omega_{\textrm w}(G)=\omega(G)$.
For graphs with resistance more than $2$, resistance and oddness may be distinct 
\cite{steffen2}. 
In Section~\ref{construction} we find a graph with weak oddness $14$ and oddness $16$ which illustrates that weak oddness and oddness can differ.
We do not know, however, whether weak odness and oddness can be different for graphs with 
weak oddness smaller than $14$. In particular, the following is an open problem.
\begin{question}
Does there exist a cubic graph with weak oddness $4$ and oddness at least $6$?
\end{question}
\noindent If the answer to this question is affirmative, then Theorem~\ref{cdcthm} is more general than the original theorem requiring oddness at most $4$.

\section{Graphs with $\rho < \omega_{\textrm w} < \omega$}
\label{construction}

Our construction utilizes smaller blocks to build larger graphs. 
A \emph{$2$-pole} is a triple $(G, s, t)$, where $G$ is a graph and $s$, $t$ are two different vertices of $G$ which both have degree one;
we will call them \emph{terminals} and the edges incident to them \emph{terminal edges}.
Each terminal of a $2$-pole serves as a place of connection with
another terminal. Two terminal edges of two disjoint $2$-poles can be naturally joined to form a
new nonterminal edge by identifying the terminal
vertices incident to them and suppressing the resulting $2$-valent vertex. 

A standard way to create terminal vertices is by \emph{splitting off} a vertex $v$ from a
graph~$G$; by this we mean removing $v$ from $G$ and
attaching a terminal vertex to each dangling edge originally
incident with $v$. 

The first step in our construction is to create the $2$-pole $H$ depicted in Fig.~\ref{diamond}. Let $P$ be the $2$-pole obtained from the Petersen graph by inserting a new vertex into one of its edges and then splitting the new vertex off. (The Petersen graph is edge-transitive, so the result of our operation is uniquely determined.) We take two copies $(P_1, s_1, t_1)$, $(P_2, s_2, t_2)$ of $P$, identify $s_1$ with $s_2$, $t_1$ with $t_2$, and attach a new terminal edge to both of $s_1$ and $t_1$.

\begin{figure}
\centering\includegraphics{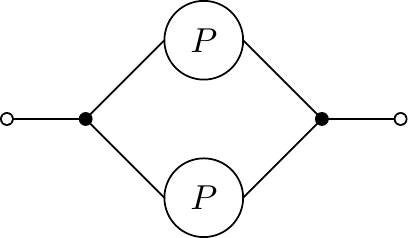}
\caption{The $2$-pole $H$ (its terminal vertices are marked by empty circles).}
\label{diamond}
\end{figure}

The next step is to create the $2$-pole $H_2$ by taking two copies of $P$ and joining a terminal edge of the first copy to a terminal edge of the second copy.

Finally, we take the complete graph on four vertices $u_0$, $u_1$, $u_2$, $u_3$ and remove all edges incident with $u_0$.
Then, for each $i\in\{1,2,3\}$, we take a new copy of $H_2$ and identify one of its terminals with $u_0$ and the other with $u_i$.
We denote the resulting cubic graph $G$ (see Fig.~\ref{thegraph}).

\begin{figure}
\centering\includegraphics{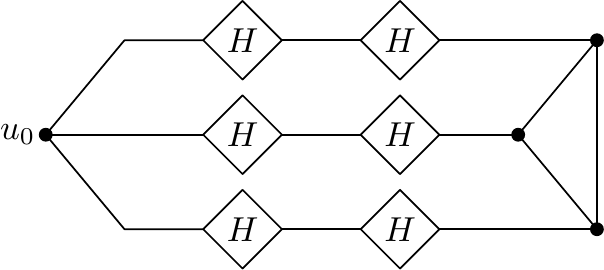}
\caption{The construction of the graph $G$.}
\label{thegraph}
\end{figure}

\begin{lemma}
If an even factor $F$ of $G$ contains a cycle $C$ passing through a non-terminal vertex of a copy $H'$ of $H$, but not lying in $H'$, then $H'$ contains at least three odd components of $F$ different from $C$.
\label{passing}
\end{lemma}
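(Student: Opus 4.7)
The plan is to pin down the structure of $C$ inside $H'$ and then count odd components on each of the two ``sides'' of $H$. Since $H'$ has exactly two terminal edges joining it to the rest of $G$ and $C$ is a cycle visiting both $H'$ and its complement, $C$ must use both terminal edges, so $C \cap H'$ is a simple path joining the two interior endpoints of those terminal edges, which I denote $S$ and $T$ (the identified copies of the former $P$-terminals). The interior of $H$ between $S$ and $T$ consists of two vertex-disjoint parallel copies of $P$, call them $P_1$ and $P_2$, and I write $a_i, b_i$ for the vertices of the Petersen part of $P_i$ adjacent to $S, T$ respectively. Any simple $S$-$T$ path inside $H$ uses exactly one of $P_1, P_2$; by symmetry assume it uses $P_1$. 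Thus $C$ contributes an $a_1$-$b_1$ path $C_1$ of some length $\ell$ inside the Petersen-minus-edge part of $P_1$; moreover $S$ and $T$ each have both their $F$-incident edges used by $C$, so the edges $S a_2$ and $T b_2$ are not in $F$.

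Consequently the restrictions $F_2 := F \cap P_2$ and $F_1' := F \cap (P_1 \setminus C_1)$ are self-contained even subgraphs, and their components are full components of $F$ lying entirely inside $H'$ and distinct from $C$. It suffices to show $F_2$ contributes at least two odd components and $F_1'$ at least one. For $F_2$ (covering all $10$ vertices of the Petersen part of $P_2$): parity forces the number of odd components to be even, and being zero would require a cover by even cycles of length $\geq 6$ (Petersen has girth $5$). Neither a Hamiltonian $10$-cycle (Petersen is non-Hamiltonian) nor two or more disjoint even cycles of length $\geq 6$ (since $6 + 6 > 10$) can achieve this, so $F_2$ has at least two odd components. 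For $F_1'$: girth $5$ forces $\ell \geq 4$ (else $C_1$ together with the removed edge $a_1 b_1$ would close into a Petersen cycle shorter than $5$) and non-Hamiltonicity forces $\ell \leq 8$ (else we would close $C_1$ into a Hamiltonian cycle of Petersen); hence $F_1'$ spans $9 - \ell \in \{1, 2, 3, 4, 5\}$ vertices, too few to contain any even cycle, so at least one component must be odd.

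Summing, $H'$ contains at least $2 + 1 = 3$ odd components of $F$, all distinct from $C$ (since $C$ has vertices outside $H'$). I expect the main technical step to be the tight girth-plus-non-Hamiltonicity accounting for $F_2$, together with handling the boundary case $\ell = 4$ for $F_1'$ in which the remaining five vertices could conceivably form a single odd $5$-cycle instead of several isolated vertices; either configuration still yields at least one odd component.
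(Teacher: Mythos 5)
Your proof is correct and follows essentially the same route as the paper's: the cycle $C$ traverses exactly one of the two Petersen pieces of $H'$, the untouched piece contributes at least two odd components, and the traversed piece contributes at least one more. The paper merely cites non-$3$-edge-colourability of the Petersen graph for the first count and asserts the second without detail, whereas you justify both via girth, parity, and non-Hamiltonicity; this is a difference in level of detail rather than in approach.
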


\begin{proof}
If $C$ passes through $H$, it passes through non-terminal vertices of exactly one copy of $P$ contained in $H$.
In that copy, there is at least one odd component of $F$ apart from $C$.
In the other copy of $P$, there must be at least two odd components, because the Petersen is not $3$-edge-colourable.
\end{proof}

\begin{theorem}
The resistance, weak oddness, and oddness of the graph $G$ are $12$, $14$, and $16$, respectively.
\label{main}
\end{theorem}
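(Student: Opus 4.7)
The plan is to establish matching upper and lower bounds for each of $\rho(G)$, $\omega_{\textrm w}(G)$, and $\omega(G)$. The upper bounds come from explicit constructions of factors, while the lower bounds rely on a case analysis of the state of each of the three copies of the central $2$-pole in $G$, using Lemma~\ref{passing} as the main tool.

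First, I would exhibit concrete factors achieving the claimed values. Each central copy contains two Petersen pieces, and each Petersen graph has a $2$-factor consisting of two disjoint $5$-cycles. By a parity argument inside each $P$-pole together with the degree constraint at $u_0$, in any $2$-factor of $G$ exactly two of the three central copies are \emph{passing} (a cycle of $F$ uses both terminal edges) and one is \emph{isolated} (no terminal edge in $F$). I would build such a $2$-factor by choosing, for each passing copy, a $2$-factor of one Petersen piece that contains the subdivided edge (giving a length-$4$ path plus a $5$-cycle) and a $2$-factor of the other Petersen piece that avoids it (giving two $5$-cycles); the isolated copy uses the ``both terminals in'' state of both of its Petersen pieces. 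Counting the $5$-cycles and verifying the parity of the long cycle through the triangle $u_1u_2u_3$ then yields $\omega(G)\leq 16$. For weak oddness I would modify this $2$-factor by strategically introducing isolated vertices inside one Petersen piece to save two odd components. For resistance, deleting two suitably chosen vertices inside each Petersen copy produces a $3$-edge-colourable graph.

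For the lower bounds I would proceed by a case analysis on the states of the three central copies $H'_1, H'_2, H'_3$. Parity within each $P$-pole forces each $H'_i$ to be passing or isolated, and the constraint that $u_0$ has degree at most $2$ in any even factor restricts the admissible configurations. For each passing copy, Lemma~\ref{passing} directly yields at least three odd components inside $H'_i$ apart from the passing cycle. For each isolated copy, a direct analysis of its internal factor, using the fact that the Petersen graph has oddness $2$, produces further odd components inside $H'_i$. Summing across the three copies, adding the contribution from the triangle, and tracking the parity of the long cycle gives the lower bounds $12$, $14$, and $16$ for $\rho$, $\omega_{\textrm w}$, and $\omega$ respectively.

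The main obstacle is the lower bound for weak oddness, because even factors admit isolated vertices which could in principle let us split a $5$-cycle into smaller pieces and reduce the number of odd components. To handle this I would extend the reasoning of Lemma~\ref{passing} to the even-factor setting and exploit that the minimum number of odd components in any even factor of the Petersen graph is still $2$ (since Petersen is not $3$-edge-colourable, no even factor with only even components can exist). A similar subtlety arises for the resistance lower bound, where one must show that no set of $11$ vertex deletions makes $G$ $3$-edge-colourable; this again reduces to analysing partial $3$-edge-colourings of each Petersen copy.
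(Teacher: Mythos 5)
Your overall strategy (explicit factors for the upper bounds, a case analysis driven by Lemma~\ref{passing} for the lower bounds) is the same as the paper's, but your accounting of the structure of $G$ is off in a way that makes the claimed values unreachable. You treat each of the three ``central copies'' $H'_i$ (the copies of $H_2$) as containing two Petersen pieces, hence six Petersen pieces in all. In the intended construction each $H_2$ consists of two copies of $H$ joined at a terminal edge, and each $H$ contains two copies of $P$, so $G$ contains six copies of $H$ and twelve Petersen pieces. (The paper's definition of $H_2$ literally says ``two copies of $P$'', which is evidently a slip: otherwise $G$ would contain no copy of $H$ at all and Lemma~\ref{passing} would be vacuous.) With only six Petersen pieces your lower bounds collapse: the argument ``at least one deletion is forced per Petersen piece'' gives only $\rho\ge 6$, and your oddness count of $2\cdot 3$ from the two passing copies plus at most $4$ from the isolated one plus the long cycle tops out near $11$, not $16$. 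Moreover, Lemma~\ref{passing} applies to copies of $H$, not of $H_2$: a circuit through $u_0$ passes through \emph{four} copies of $H$ (two per traversed $H_2$), each contributing three odd components, and each of the two untouched copies of $H$ contributes two more, which is how one reaches $12+4=16$. Relatedly, your description of a passing copy --- one Petersen piece whose $2$-factor contains the subdivided edge and one whose $2$-factor avoids it --- is correct for a copy of $H$, where the two pieces sit in parallel between shared terminals, but not for two pieces in series, where a through-cycle must traverse both; this again signals the structural confusion.

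For the weak-oddness lower bound, the decisive case is the one your ``two passing, one isolated'' framework excludes: in an even factor $u_0$ may have degree $0$, and then no copy of $H$ is passed through, each of the twelve copies of $P$ contributes at least one odd circuit (the Petersen graph is not $3$-edge-colourable), the triangle $u_1u_2u_3$ contributes one more, and the isolated vertex $u_0$ is itself an odd component, giving $12+1+1=14$; forgetting that an isolated vertex counts as an odd component leaves you one short, and this is also exactly the configuration realising the optimum. (You do not need to ``extend Lemma~\ref{passing} to the even-factor setting'' --- it is already stated for even factors.) Finally, for the resistance upper bound the paper deletes \emph{one} vertex per copy of $P$ (twelve in total), chosen so that each $P$ becomes $3$-edge-colourable with distinctly coloured terminal edges; your ``two per Petersen copy'' happens to total twelve only because of the undercount of pieces, and under the correct structure it would give $24$.
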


\begin{proof}
First, we determine the resistance of $G$. Consider a copy of $P$ included in $H$. If we remove a vertex at distance $3$ from both terminals from it, the rest is $3$-edge-colourable in such a way that the colours of terminal edges of $P$ are different (one possible colouring is presented in Fig.~\ref{colouring}).
Consequently, it is possible to remove two vertices of $H$ (one from each copy of $P$) and properly colour the edges of the remaining graph with $3$ colours so that the terminal edges of $H$ have the same colour. In other words, $H$ behaves just like an edge for our purpose, and so does $H_2$, so we are essentially colouring just $K_4$.
Therefore, by removing $12$ vertices from $G$ (two from each copy of $P$), we obtain a $3$-edge-colourable graph. 
On the other hand, if we do not remove a vertex from $P$, it is not $3$-edge-colourable. This proves that $\rho(G) = 12$.

Assume that $F$ is an even factor of $G$ with minimum number of odd cycles. If $u_0$ is an isolated vertex in $F$, then there are at least $12$ odd circuits in the copies of $P$, and one more triangle in the rest of the graph. If $u_0$ is not isolated in $F$, then some four copies of $H$ are passed through by a circuit of $F$, so, by Lemma~\ref{passing}, they contain at least $4\cdot 3$ odd circuits. Each of the remaining two copies of $H$ contain at least two odd circuits. In either case, $F$ has at least $14$ odd components, and so $\omega_{\textrm w}(G)\ge 14$.

Next, we describe an even factor $F$ of $G$ which contains $u_0$ as an isolated vertex and has $14$ odd components. No copy of $H$ is passed through by $F$. The Petersen graph has a $2$-factor comprised of two $5$-circuits, and thus $H$ has a $2$-factor composed of two $5$-circuits (one in each copy of $P$) and one $12$-circuit. Altogether, $F$ is comprised of six $12$-circuits, twelve $5$-circuits, one triangle $u_1u_2u_3$, and one isolated vertex. Consequently, $\omega_{\textrm w}\le 14$. 

Finally, we determine the oddness of $G$. In every $2$-factor of $G$, the vertex $u_0$ belongs to a circuit of $F$, and this circuit must thus pass through four copies of $H$.
According to Lemma~\ref{passing}, those copies of $H$ contain at least $12$ odd circuits of $F$, and each of the remaining two copies of $H$ contains at least two odd circuits of $F$. Consequently, there are at least $16$ odd cycles, so $\omega(G)\ge 16$. It is easy to find a $2$-factor of $F$ having $16$ odd cycles (actually, all $2$-factors of $G$ have this property).
\end{proof}

\begin{figure}
\centering\includegraphics{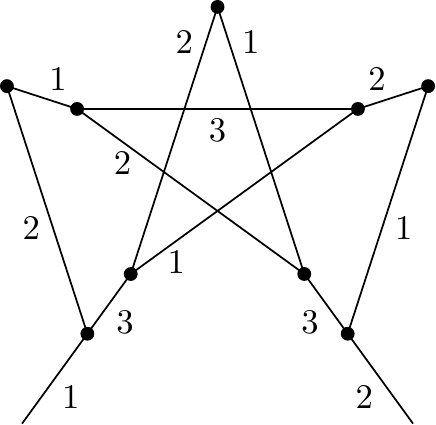}
\caption{A $3$-edge-colouring of $P$ without a vertex.}
\label{colouring}
\end{figure}

As can be easily seen from the proof of Theorem~\ref{main}, the construction can be modified by using more copies of $H$ in $H_2$,
which would lead to an arbitrarily large difference between oddness and weak oddness.
If we also insert one copy of $H_2$ into each of the edges $u_1u_2$, $u_2u_3$ and $u_3u_1$, the difference between weak oddness and resistance grows to $4$:
it is enough to delete one vertex in each copy of $P$ in order to get a $3$-edge-colourable graph, but we have to include all the isolated vertices $u_0$, $u_1$, $u_2$, $u_3$ in an even factor to have only one odd component in each copy of $P$. For an even larger difference, we can take 
a large $3$-edge-colourable cubic graph and insert a copy of $H_2$ into each of its edges.

Besides showing that resistance, weak oddness, and oddness can be all distinct,
we can also make the following observation.
If we expand the vertex $u_0$ of $G$ into a triangle, the resulting graph will have oddness $14$, 
while $G$ has oddness $16$. Thus expanding a vertex into a triangle can decrease oddness.
By using more copies of $H$ to produce $H_2$ we can obtain a graph 
in which the expansion of $u_0$ into a triangle can decrease the oddness by as much as we wish (without changing its parity, of course).

\section{Conclusion}
\label{conc}

While we have demonstrated that weak oddness might differ from oddness, we have very little insight into a characterisation of graphs for which $\omega=\omega_{\textrm w}$.
Apparently, a small snark $G$ has $\omega=\omega_{\textrm w}=2$ (this is true for snarks up to at least $26$ vertices \cite{my} and cyclically $4$-connected snarks up to at least $36$ vertices \cite{brinkmann}). On the one hand, a snark with edge cuts of size $2$ can have vastly different values of $\omega$ and $\omega_{\textrm w}$. On the other hand, if the Jaeger's conjecture \cite{jaeger7cc} is true, then there are no cyclically $7$-connected snarks, so for cubic graphs with sufficiently large cyclic connectivity it might be well possible that $\omega=\omega_{\textrm w}=0$. We propose the following question.
\begin{question}
For which integers $k\ge 3$ does there exist a cyclically $k$-connected snark $G$ such that $\omega(G)\neq\omega_{\textrm w}(G)$?
\end{question}
\noindent Note that for $3$-connected snarks the answer coincides with the answer to the following question.
\begin{question}
In a $3$-connected snark, can the expansion of a vertex into a triangle decrease oddness?
\end{question}

\noindent\textbf{Acknowledgements.} We would like to thank Barbora Candráková, Edita Máčajová, Eckhard Steffen, and Martin Škoviera for many fruitful discussions on topics related to even factors in cubic graphs.

The authors acknowledge support from the research grants VEGA 1/0042/14 and VEGA 1/0474/15.

\end{document}